\begin{document}
%
% --- Author Metadata here ---
%\conferenceinfo{WOODSTOCK}{'97 El Paso, Texas USA}
%\CopyrightYear{2007} % Allows default copyright year (20XX) to be over-ridden - IF NEED BE.
%\crdata{0-12345-67-8/90/01}  % Allows default copyright data (0-89791-88-6/97/05) to be over-ridden - IF NEED BE.
% --- End of Author Metadata ---

\title{An Evolutionary Approach for Optimizing Hierarchical Multi-Agent System Organization
%\titlenote{(Produces the permission block, and
%copyright information). For use with
%SIG-ALTERNATE.CLS. Supported by ACM.}
}

%\subtitle{[Extended Abstract]
%\titlenote{A full version of this paper is available as
%\textit{Author's Guide to Preparing ACM SIG Proceedings Using
%\LaTeX$2_\epsilon$\ and BibTeX} at
%\texttt{www.acm.org/eaddress.htm}}}
%
% You need the command \numberofauthors to handle the 'placement
% and alignment' of the authors beneath the title.
%
% For aesthetic reasons, we recommend 'three authors at a time'
% i.e. three 'name/affiliation blocks' be placed beneath the title.
%
% NOTE: You are NOT restricted in how many 'rows' of
% "name/affiliations" may appear. We just ask that you restrict
% the number of 'columns' to three.
%
% Because of the available 'opening page real-estate'
% we ask you to refrain from putting more than six authors
% (two rows with three columns) beneath the article title.
% More than six makes the first-page appear very cluttered indeed.
%
% Use the \alignauthor commands to handle the names
% and affiliations for an 'aesthetic maximum' of six authors.
% Add names, affiliations, addresses for
% the seventh etc. author(s) as the argument for the
% \additionalauthors command.
% These 'additional authors' will be output/set for you
% without further effort on your part as the last section in
% the body of your article BEFORE References or any Appendices.

%\numberofauthors{1} %  in this sample file, there are a *total*
% of EIGHT authors. SIX appear on the 'first-page' (for formatting
% reasons) and the remaining two appear in the \additionalauthors section.
%
\author{
% You can go ahead and credit any number of authors here,
% e.g. one 'row of three' or two rows (consisting of one row of three
% and a second row of one, two or three).
%
% The command \alignauthor (no curly braces needed) should
% precede each author name, affiliation/snail-mail address and
% e-mail address. Additionally, tag each line of
% affiliation/address with \affaddr, and tag the
% e-mail address with \email.
%
% 1st. author
%\alignauthor Paper 838
Zhiqi Shen, Ling Yu, Han Yu\\
\institute{Nanyang Technological University, Singapore}
%       \affaddr{N4-B3-06, LILY}\\
%       \affaddr{Nanyang Technological University}\\
%       \affaddr{50 Nanyang Avenue, Singapore}\\
%       \email{boyangli@gatech.edu,\{han.yu,zqshen,ascymiao\}@ntu.edu.sg}
% 2nd. author
%\alignauthor
%Chunyan Miao\\
%       \affaddr{School of Computer Engineering}\\
%       \affaddr{Nanyang Technological University}\\
%       \affaddr{50 Nanyang Avenue, Singapore}\\
%       \email{ASCYMIAO@ntu.edu.sg}
%\alignauthor
%Yong Liu\\
%       \affaddr{Nanyang Technological University}\\
%       \affaddr{50 Nanyang Avenue, Singapore}\\
%       \email{LIUY0054@e.ntu.edu.sg}
%\alignauthor
%Han Yu
%       \affaddr{Nanyang Technological University}\\
%       \affaddr{50 Nanyang Avenue, Singapore}\\
%       \email{han.yu@ntu.edu.sg}
% 2nd. author
% 3rd. author
%\alignauthor XXX\\
%        \affaddr{XX}\\
%        \affaddr{XX}\\
%        \affaddr{XX}\\
%        \email{XX}
 % 2nd. author
%\and  % use '\and' if you need 'another row' of author names
% 4th. author
%\alignauthor Lawrence P. Leipuner\\
%       \affaddr{Brookhaven Laboratories}\\
%       \affaddr{Brookhaven National Lab}\\
%       \affaddr{P.O. Box 5000}\\
%       \email{lleipuner@researchlabs.org}
%% 5th. author
%\alignauthor Sean Fogarty\\
%       \affaddr{NASA Ames Research Center}\\
%       \affaddr{Moffett Field}\\
%       \affaddr{California 94035}\\
%       \email{fogartys@amesres.org}
%% 6th. author
%\alignauthor Charles Palmer\\
%       \affaddr{Palmer Research Laboratories}\\
%       \affaddr{8600 Datapoint Drive}\\
%       \affaddr{San Antonio, Texas 78229}\\
%       \email{cpalmer@prl.com}
}
% There's nothing stopping you putting the seventh, eighth, etc.
% author on the opening page (as the 'third row') but we ask,
% for aesthetic reasons that you place these 'additional authors'
% in the \additional authors block, viz.
%\additionalauthors{Additional authors: John Smith (The Th{\o}rv{\"a}ld Group,
%email: {\texttt{jsmith@affiliation.org}}) and Julius P.~Kumquat
%(The Kumquat Consortium, email: {\texttt{jpkumquat@consortium.net}}).}
%\date{30 July 1999}
% Just remember to make sure that the TOTAL number of authors
% is the number that will appear on the first page PLUS the
% number that will appear in the \additionalauthors section.

\maketitle
\begin{abstract}
It has been widely recognized that the performance of a multi-agent system is highly affected by its organization. A large scale system may have billions of possible ways of organization, which makes it impractical to find an optimal choice of organization using exhaustive search methods. In this paper, we propose a genetic algorithm aided optimization scheme for designing hierarchical structures of multi-agent systems. We introduce a novel algorithm, called the hierarchical genetic algorithm, in which hierarchical crossover with a repair strategy and mutation of small perturbation are used. The phenotypic hierarchical structure space is translated to the genome-like array representation space, which makes the algorithm genetic-operator-literate. A case study with 10 scenarios of a hierarchical information retrieval model is provided. Our experiments have shown that competitive baseline structures which lead to the optimal organization in terms of utility can be found by the proposed algorithm during the evolutionary search. Compared with the traditional genetic operators, the newly introduced operators produced better organizations of higher utility more consistently in a variety of test cases. The proposed algorithm extends of the search processes of the state-of-the-art multi-agent organization design methodologies, and is more computationally efficient in a large search space.
\end{abstract}

% A category with the (minimum) three required fields
%\category{I.2.11{Artificial Intelligence}{Distributed Artificial Intelligence}[Multiagent systems]
%A category including the fourth, optional field follows...
%\category{D.2.8}{Software Engineering}{Metrics}[complexity measures, performance measures]

%\terms{Algorithms, Performance, Design}

%\keywords{Experimental, Systems, Biologically-Inspired Approaches, Organizational Planning}

\section{Introduction}
The organization of a multi-agent system (MAS) provides a framework for agent activities and interactions through the definition of agent roles, groups, tasks, behavioral expectations and authority relationships. A proper organization can ensure the behavior of the agents to be externally observable \cite{Ferber-et-al:2004}. Particularly, in large scale systems such as crowdsourcing systems \cite{Yu-et-al:2012}, to form and evolve an organization makes it possible for the system to exploit collective efficiencies and to manage emerging situations \cite{Lesser:1998}. Experiments and simulations have shown that various organizations employed by a system with the same set of agents may have different impacts on its performance \cite{Okamoto-et-al:2008,Fernandez-Ossowski:2008,Horling-Lesser:2008,Sims-et-al:2008,Zafar-et-al:2008,Zhang-et-al:2013,Yu-et-al:2013IEEE,Lesser-Corkill:2014}.

Among all kinds of organizations, the hierarchical structure is one of the most common structures observed in multi-agent systems. Many multi-agent systems can be abstracted as hierarchical, tree-like structures or sets of parallel hierarchical structures, where agents are categorized in different levels in the hierarchies \cite{Kirley:2006}. Often, the level of an agent indicates its capabilities and roles. A specific level in the system consists of equally capable agents, performing similar roles, as seen in the distributed information retrieval (IR) system described in \cite{Horling-Lesser:2008}.

For a large hierarchical MAS, there exist a great variety of possible ways to organize the system. Due to the difference in the depth and the width of the hierarchy, the number of organization instances increases exponentially with the number of agents. Although many methodologies for organization modeling have been proposed, few of them present an effective way to search for an optimal organization instance. In order to solve the problem, this paper proposes a genetic algorithm (GA) approach as an alternative to the conventional enumeration methods for optimizing hierarchical multi-agent systems. Inspired by biological evolution processes such as selection, reproduction, and mutation, GAs are known to be robust global search algorithms for optimization and machine learning \cite{Holland:1992,Back:1996,Mezura-Montes:2007}. The heuristic nature of GA helps it to locate the global optimum in a vast search space. We design novel crossover and mutation operators to make the algorithm suitable for organization evolution and thereby ensure competitive performance. We will test the algorithm in an example of the IR model \cite{Horling-Lesser:2008} which exhibits numerous possible organizational variants and verify its capability through simulations in different scenarios.

\section{Related Work}
The design of a multi-agent system organization has been investigated by many researchers. Early methodologies such as Gaia \cite{Wooldridge-et-al:2000} and OMNI \cite{Vazquez-Salceda:2005} aim to assist the manual design process of agent organizations. Instead of relying heavily on the expertise of human designers, it is desirable to automate the process of producing multi-agent organization designs. In this sense, a quantitative measurement of a set of metrics is needed to rapidly and precisely predict the performance of the MAS. With these metrics we can evaluate a number of organization instances, rank them, and select the best one without introducing heavy cost by actually implementing the organization designs.

In \cite{Horling-Lesser:2008}, an organizational design modeling language (ODML) was proposed, and the utility value was defined as the quantitative measurement of the performance of a distributed sensor network and an information retrieval system. Several approaches, including the exploitation of hard constraints and equivalence classes, parallel search, and the use of abstraction, have been studied in order to reduce the complexity of searching for a valid optimal organization.

Another organization designer, KB-ORG, which also incorporates quantitative utility as a user evaluation criterion, was proposed for multi-agent systems in \cite{Sims-et-al:2008}. It uses both application-level and coordination-level organization design knowledge to explore the search space of candidate organizations selectively. This approach significantly reduces the exploration effort required to produce effective designs as compared to modeling and evaluation-based approaches that do not incorporate designer expertise.

Nonetheless, similar to ODML, KB-ORG aims at pruning the search space. The design knowledge alone is inadequate for the identification of an optimal design when the possible variety of the organization structure becomes large.

Evolutionary based search mechanisms have been used to help the design of MAS organizations on a few occasions. For example, in \cite{Yang-Luo:2007}, a GA-based algorithm is proposed for coalition structure formation which aims at achieving the goals of high performance, scalability, and fast convergence rate simultaneously. And in \cite{Li-et-al:2009}, a heuristic search method, called evolutionary organizational search (EOS), which is based on genetic programming (GP), was introduced. A review of evolutionary methodologies, mostly involving co-evolution, for the engineering of multi-agent market mechanisms, can also be found in \cite{Phelps-et-al:2010}. These techniques show a promising direction to deal with the organization search in hierarchical multi-agent systems, as exhaustive methods, such as breadth-first search and depth-first search, become inefficient and impractical in a large search space.

\section{Organization Representation}
Generally speaking, the organization of a hierarchical MAS consists of a number of tree structures. According to the number of "leaders", it can be either a single tree or a set of trees. The intermediate nodes in a tree are responsible of assigning tasks to their subordinates, as well as reporting back to their higher-level authorities. Information exchange is only allowed in the vertical directions between higher and lower levels. There is no interaction of agents horizontally, or among different hierarchies. The leaf nodes are the bottom of the structure and they complete the most basic tasks.

Optimization in such a search space can be handled by evolutionary algorithms \cite{Mezura-Montes:2007}, especially genetic programming, which supports populations of tree structures. It has also been shown that some well-structured trees (e.g. binary trees), with a certain number of levels and a fixed number of subordinates per node, can be represented by arrays \cite{Nan-et-al:2005,Aranha-Iba:2009}. Transformations are feasible as a result of their regular structures, which allow the traditional crossover and mutation operators of other evolutionary algorithms, such as genetic algorithms, to take effect.

We propose an array representation of hierarchical MAS organizations which is applicable to a much broader range of hierarchical structures than just binary trees. It converts s set of hierarchical trees into a fixed-length array with integer components, which resemble gene sequences. The representation is not limited to describe a single tree, and the number of subordinates of each node need not be a constant. Unbalanced trees, in which leaf nodes are not on the same hierarchical level, can also be depicted using this representation.

\subsection{Translating Organizations into Genomes}
We assume that the hierarchical MAS considered here have the following properties. We assume that the number of leaf node agents is fixed before the search. We also assume that the maximum possible number of levels is determined. Thus, the total number of agents in the organization is bounded. Based on these assumptions, we can make use of the partition concept to convert the organization from tree structures to arrays.

Let N be the total number of leaf nodes or end nodes, so that the they can be numbered as $1,2,...,N$ respectively from left to right. Let M be the maximum tree depth (i.e. maximum height of the structure). The reason for limiting the height is that very tall structures can be slow or irresponsive, as the long path length from root to leaf increases message latency among the agents. The organization of a hierarchical MAS can be outlined by Representation 1:
\begin{equation}
a_1a_2a_3...a_{N-1}
\end{equation}
where $a_i$ is an integer between 1 and $M$, denoting the level number where leaf nodes $i$ and $i+1$ start to separate.

An example with seven leaf nodes ($N=7$) is illustrated in Figure \ref{fig:1}. It consists of two trees. On Level 1, the four leaf nodes on the left and the three leaf nodes on the right separate into two trees. In other words, there is a separation between the leaf nodes 4 and 5, so $a_4=1$. On Level 2, there are two leaf nodes and one intermediate node (three nodes altogether) under the left tree root, corresponding to the ``2 2" (two partition numbers) to the left of the ``1" in the array. The one leaf node and one intermediate node (two nodes altogether) under the right tree root give the ``2" (one partition number) to the right. Both intermediate nodes on Level 2 have two leaf nodes as their subordinates (leaf nodes 3 and 4, leaf nodes 6 and 7), which are separated on Level 3, resulting in the two 3's in the 3rd and 6th places in the array. Therefore, we get the whole array ``2 2 3 1 2 3" for the organization.

Conversely, we can also obtain an organization by interpreting the representation array. For instance, if we want to determine which level node 4 in Figure \ref{fig:1} sits on, we need to examine both the node's left and right neighbor. The 3rd and 4th digits in the array are ``3" and ``1". It means that node 3 and node 4 are separated on Level 3. Node 4 and node 5 are separated on Level 1. As a result, node 4 is on Level 3 (larger number between 3 and 1). Similarly, because the fifth digit is ``2", i.e. node 5 and node 6 are separated on level 2, node 5 should be on level 2 (larger number between 2 and 1).

\begin{figure}[t!]
    \includegraphics[trim = 40mm 240mm 40mm 25mm, clip, width = 5in]{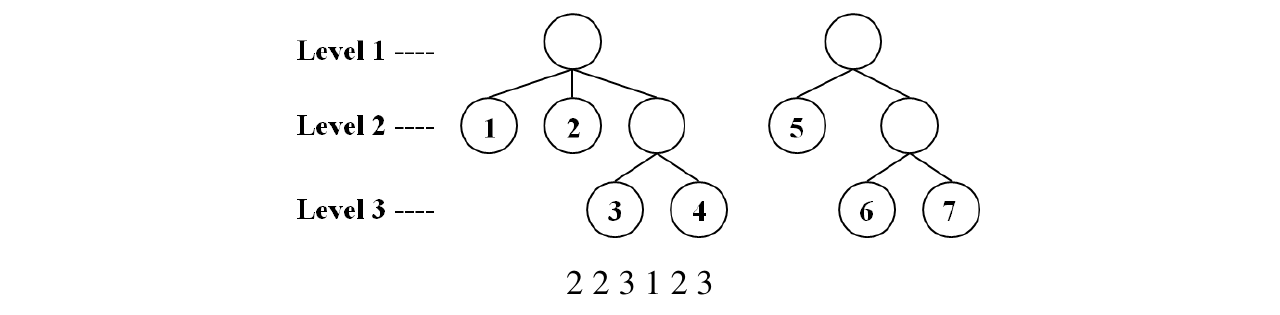}
    \caption{A sample organization and its array representation. Agent nodes are displayed as circles in the figure. Leaf nodes are numbered.}\label{fig:1}
\end{figure}

\begin{theorem}
The above representation has the following properties:
\begin{enumerate}
    \item For every hierarchical organization instance which satisfies our assumptions in the beginning of Section 3.1, the array representation that can be generated is unique.
    \item For every representation of the above mentioned form, there is an organization instance corresponding to it.
\end{enumerate}
\label{Them:TR1}
\end{theorem}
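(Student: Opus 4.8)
The plan is to establish both parts at once by exhibiting the encoding and decoding maps explicitly and showing they are mutually inverse, so that organizations satisfying the Section~3.1 assumptions and arrays in $\{1,\dots,M\}^{N-1}$ are in bijection. I model an organization as an ordered forest of rooted trees whose $N$ leaves are numbered $1,\dots,N$ from left to right and whose depth is at most $M$; it is convenient to attach a virtual super-root at level~$0$ joining all tree roots, so that any two leaves have a well-defined lowest common ancestor (LCA). I then define the encoding $\Phi$ by letting $a_i$ be the level at which leaves $i$ and $i+1$ first separate, namely $\mathrm{depth}(\mathrm{LCA}(i,i+1))+1$, which equals $1$ exactly when they lie in different trees. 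The one identity I will lean on throughout is that for leaves $i<j$ the separation level equals $\min_{i\le k<j} a_k$.

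For Part~1, I would argue well-definedness and uniqueness directly. In a fixed forest the ancestry relation and the level of every node are intrinsic, so $\mathrm{depth}(\mathrm{LCA}(i,i+1))$ is a single unambiguous integer for each $i$, giving a unique value $a_i$; and $1\le a_i\le M$ holds because an LCA sits at a level between $0$ and $M-1$. Hence each admissible organization generates exactly one array of the required form. Injectivity of $\Phi$ (distinct organizations yield distinct arrays) I would defer, obtaining it for free once the decoding is shown to invert $\Phi$.

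For Part~2, I would construct the decoding $\Psi$ top-down and level-by-level and prove it always outputs a valid organization. First split $[1,N]$ at the positions $k$ with $a_k=1$ to obtain the trees, each headed by a node at level~$1$. In general, for a node at level $\ell$ spanning a leaf block $[p,q]$ with $p<q$, I form its children by cutting $[p,q]$ at the positions $k$ with $a_k=\ell+1$ and recursing at level $\ell+1$ on each piece; if no such position exists I attach a single child spanning $[p,q]$ (a unary intermediate node) and recurse; a singleton block $[p,p]$ becomes a leaf placed at the current level. I would then show, by induction on $q-p$ (or on $M-\ell$), that this terminates in a forest of depth at most $M$ in which leaf $i$ sits at level $\max(a_{i-1},a_i)$ and the separation level of $i,i+1$ is exactly $a_i$. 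This proves both that every array has a corresponding organization and that $\Phi\circ\Psi=\mathrm{id}$.

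The hard part, and the step I would spend the most care on, is the reverse composition $\Psi\circ\Phi=\mathrm{id}$, i.e.\ showing that the \emph{entire} internal structure of the forest (not merely the leaf levels $\max(a_{i-1},a_i)$) is recovered from the adjacent separations alone. I would prove this by induction using the range-minimum characterization of LCA levels, showing that the blocks carved out by $\Psi$ coincide exactly with the leaf-spans of the original internal nodes. Here I must also fix a convention for the one genuinely ambiguous situation the representation cannot encode---a redundant unary chain sitting directly above a single leaf, whose length leaves every separation value untouched (most visibly when $N=1$, where the array is empty). Under the natural assumption that a leaf's level is determined by its neighbours, equivalently that no intermediate node has a single leaf as its sole subordinate, this ambiguity disappears and the bijection, hence both claimed properties, follows.
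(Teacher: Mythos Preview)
Your encoding $\Phi$ and decoding $\Psi$ are essentially the same as the paper's: the paper also encodes by recording, for each adjacent leaf pair, the level at which they separate, and decodes top-down by using the positions with value $\le \ell$ as separators to carve out the nodes at level~$\ell$. So for the two claims actually stated in the theorem, your plan and the paper's argument coincide.

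Where you diverge is in scope. You set out to prove a \emph{bijection} and accordingly flag $\Psi\circ\Phi=\mathrm{id}$ as ``the hard part.'' But the theorem does not assert injectivity of $\Phi$: Part~1 only says that the array \emph{generated from} a given organization is unique, i.e.\ that $\Phi$ is well-defined (single-valued), which you dispatch in your first paragraph. Part~2 only asks for surjectivity, which your construction of $\Psi$ together with $\Phi\circ\Psi=\mathrm{id}$ already gives. The paper stops exactly there and, immediately after the proof, explicitly remarks that the organization corresponding to a given array is \emph{not} unique (its Figure~2 shows two organizations sharing the representation $[2\ 2]$), describing the map as a surjection from organizations onto arrays. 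So your proposed extra assumption (no unary chain above a single leaf) is not part of the paper's hypotheses, and the step you call hardest is simply not required.

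In short: your plan proves the theorem, via essentially the paper's route, once you drop the attempt at $\Psi\circ\Phi=\mathrm{id}$. Your LCA/range-minimum formulation and the identity $\mathrm{sep}(i,j)=\min_{i\le k<j}a_k$ are cleaner than the paper's more descriptive treatment and would make the write-up tighter, but they buy you rigor rather than new content here.
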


\begin{proof}

\begin{enumerate}
    \item We firstly prove the existence of an array representation for every hierarchical organization instance. The way of generating an array representation of an arbitrary hierarchical organization instance can be expressed as follows. If there are $N$ leaf nodes, we prepare $N–1$ slots. Firstly, organize the structure well so that the root nodes, intermediate nodes, and leaf nodes are on their proper levels. Secondly, we examine the separation pattern between adjacent leaf nodes one by one from left to right. Fill the slots with the level number where the adjacent leaf nodes start to separate. See Figure \ref{fig:1} for an example. The first two leaf nodes on the left are direct subordinates of the first tree root, i.e. on the root level (Level 1) they do not separate. However, on Level 2, they separate into different nodes. So the first number is 2. The second slot should also be filled with 2 because the second and third leaf nodes on the left separate on Level 2. And as the third and fourth leaf nodes are direct subordinates of an intermediate node on Level 2, they start to separate on Level 3. 3 should be the third number in the array. And so on, we can get the values, which are the level numbers, for all the slots. Together they form the required representation.

        We then prove the uniqueness of the generated array representation. If array representations $a_1a_2a_3...a_{N–1}$ and $b_1b_2b_3...b_{N–1}$ which are derived from the same organization instance are different, there exits an $i\in{1,2,...,N}$ such that $a_i\neq b_i$. This shows that the leaf nodes $i$ and $i+1$ separate at different levels in the two corresponding organization structures, which means the organization structures are not identical.
    \item Given an array representation with positive integers of length $L$, we would like to construct an organization instance containing $L+1$ leaf nodes as follows. Find all the digit ``1"s in the representation (if there are any). Calculate the number of digits (greater than 1) between adjacent 1's one by one from left to right, and denote them as $n_1,n_2,n_3,...,n_{k+1}$, where $k$ is the number of 1's. If there are no 1's, then $k=0$ and $n_1=L$. The corresponding organization has $k+1$ root nodes with $n_1+1,n_2+1,n_3+1,...,n_{k+1}+1$ leaf nodes, respectively, from left to right. So far we have completed the root level (Level 1) of the organization. For instance, with array [2 2 3 1 2 3], $n_1=3$, $n_2=2$, i.e. there are two root nodes with 4 and 3 leaf nodes respectively. For Level 2, we take segments with 1's and 2's as separators. These segments should only contain digits greater than 2 (if any). Like what is done for Level 1, the number of digits between adjacent separators are recorded as $r_1,r_2,r_3,...,r_{t+1}$, where t is the total number of 1's and 2's. If $r_i=0$, it corresponds to a leaf node; otherwise, it corresponds to an intermediate node on Level 2. After that, take segments with 1's, 2's, and 3's as separators, and repeat the steps until the greatest numbers in the representation are examined. In this way we can obtain the full organization instance.
\end{enumerate}

\end{proof}

\begin{figure}%[t!]
    \includegraphics[trim = 40mm 240mm 40mm 25mm, clip, width = 5in]{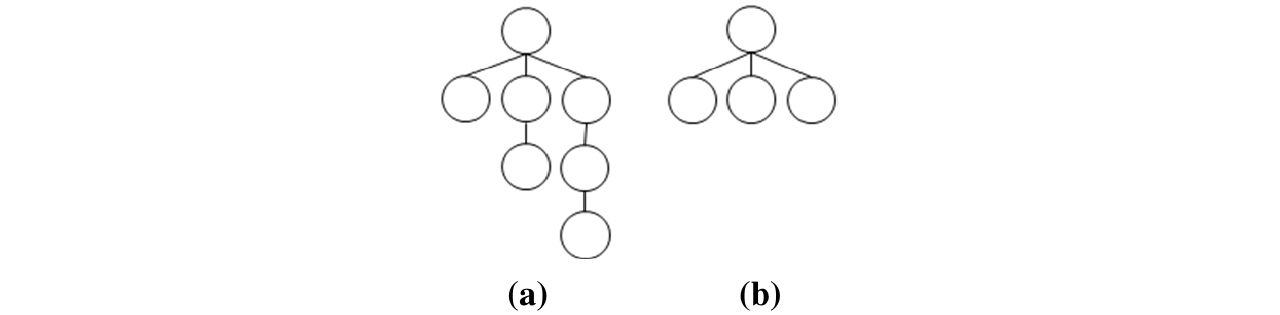}
    \caption{Organizations with the same representation.}\label{fig:2}
\end{figure}

Note that the organization instance is non-unique. Figure \ref{fig:2}(a) illustrates an extreme case where all three leaf nodes separate on Level 2, so the representation is [2 2]. It has the same representation as the organization in Figure \ref{fig:2}(b). When such circumstances arise, we should examine all the possible organization instances that correspond to a representation and use the best one. In the following sections we explain that in the IR model, the sub-organizations having nodes with only one subordinate are uneconomical and should be simplified to achieve higher utility.

So far, we have established a surjective mapping from the set of all valid structure instances containing $N$ leaf nodes with maximum height $M$, denoted as $A$, to the set of all arrays containing $N–1$ integer elements ranging from 1 to $M$, denoted as $B$. Furthermore, the representation is compatible with genetic operators such as one-point, two-point or uniform crossover, i.e. the offspring generated after the crossover of individuals from set $B$ still belong to set $B$. Bit-wise mutation can also be applied here, so that every bit of the genome ai is mutated to a randomly picked different value from $\{1,2,...,M\}/\{a_i\}$ according to the user defined mutation probability.

\subsection{Simplifying Organizations}
The above representation can be applied to a general hierarchical MAS organization. For specific organization search problems, we may find it beneficial to simplify the representation in order to prune the search space and avoid unnecessary candidate evaluations of the algorithm. Trimming, combining, and reducing of branches are easy to achieve using the proposed representation. We will give an example of how to remove redundant intermediate nodes of the IR system in Section 5.1.

\subsection{Variations of Representations}
In Section 3.1, we have assumed that the leaf nodes are homogeneous. In such circumstances, an $N-1$ array is enough to represent a hierarchical organization of a MAS. Nonetheless, in view of the circumstances where each leaf node must be treated uniquely, a second row can be added to the array representation to address the distinction resulting from permutations. This will make the representation to be in the form of a $2(N-1)$ array (Representation 2):
\begin{align*}
\left( \begin{array}{c}
a_1a_2a_3...a_{N-1} \\
p_1p_2p_3...p_{N-1}
\end{array} \right)
\end{align*}
where $\{a_i\}$ are still integers between 1 and $M$, denoting the level of the partition between leaf nodes $i$ and $i+1$, and $p_1,p_2,...,p_{N–1}$ are a permutation of 1 to $N$ with the last number discarded. Still using the example in Figure \ref{fig:1}, now we use numbers 1, 2, ..., 7 to distinguish the mutually different leaf nodes. If in the organization they are 5, 3, 2, 1, 4, 7, 6, respectively, then the representation is:
\begin{align*}
\left( \begin{array}{cccccc}
2 & 2 & 3 & 1 & 2 & 3\\
5 & 3 & 2 & 1 & 4 & 7
\end{array} \right)
\end{align*}
One may also want to design an organization in which the number of leaf node agents is not fixed beforehand. To account for varied number of leaf node agents, we may use the following Representation 3:
\begin{align*}
a_1a_2a_3...a_{N_1-1}\underbrace{00...0}_{(N_2-N_1)}
\end{align*}
where $N_1$ is the actual number of leaf nodes of the representation, $N_2$ is the maximum number of leaf nodes allowed in the organization, and the remaining positions are filled with zeros. These variants of representations will function in the same manner as the Representation 1 when taken to go through genetic operators which are introduced next.

\section{Crossover and Mutation Operators}

\begin{figure*}%[t!]
    \subfigure[Array representation.]{
    \includegraphics[trim = 35mm 235mm 35mm 25mm, clip, width = 5in]{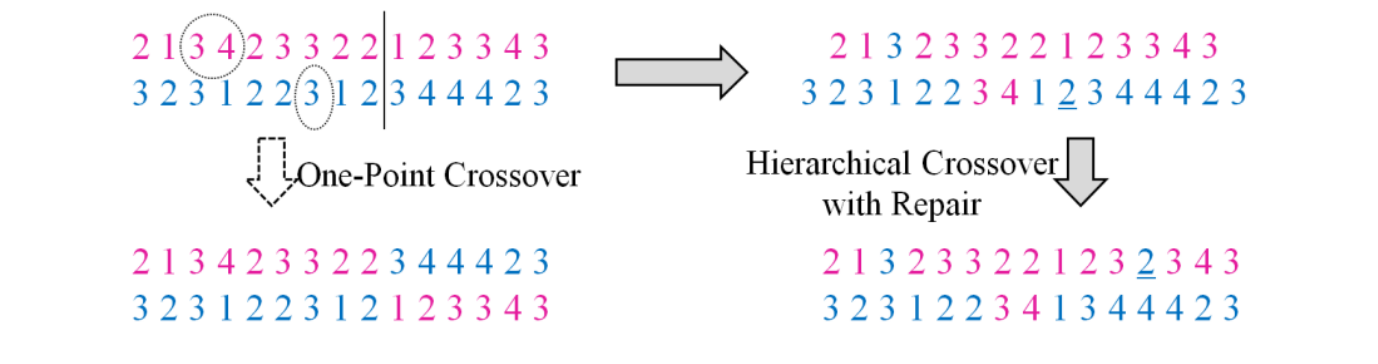}}

    \subfigure[One-point crossover.]{
    \includegraphics[trim = 39mm 229mm 39mm 25mm, clip, width = 5in]{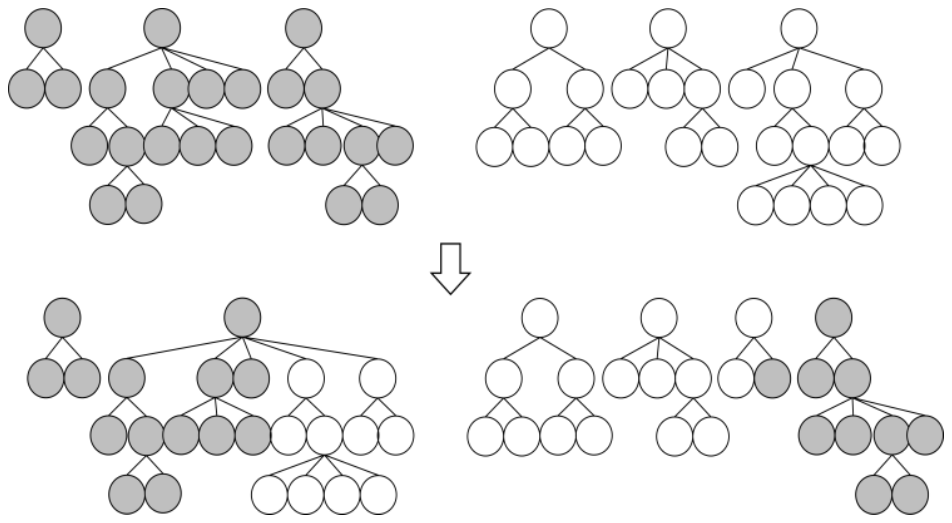}}

    \subfigure[Hierarchical crossover.]{
    \includegraphics[trim = 30mm 149mm 30mm 25mm, clip, width = 5in]{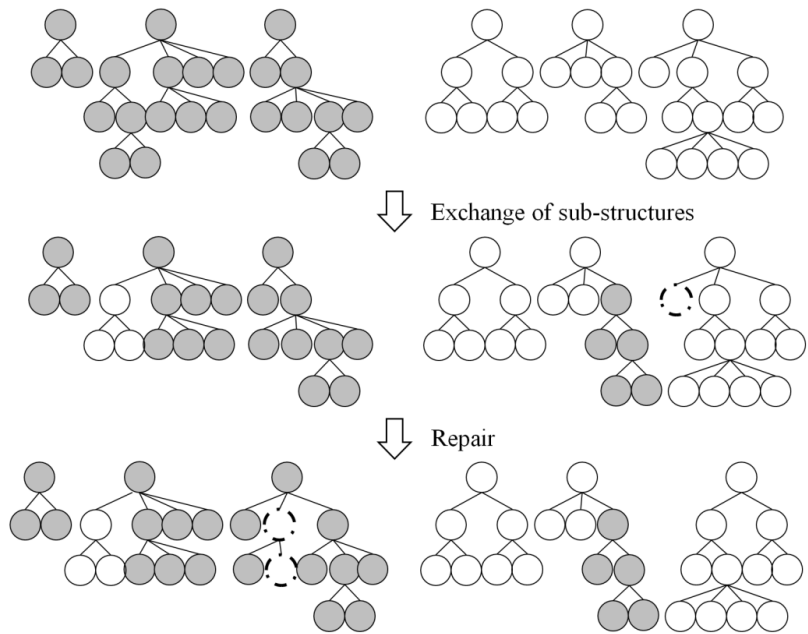}}

    \caption{Illustration of one-point crossover and hierarchical crossover using array representation and organization structures.}\label{fig:3}
\end{figure*}

The traditional one-point crossover chooses a random slicing position along the chromosomes of both parents. All data beyond that point in either solution is swapped between the two parents. The resulting chromosomes are two offspring. Though commonly used in genetic algorithms, this crossover method only influences the structure near the crossover point, as shown in Figure \ref{fig:3}(a,b).

It may not be enough to generate new offspring in large-scale systems. To speed up the evolution and increase the chance of getting a desired structure with higher utility, new crossover operators are needed. In this paper, we propose a novel crossover operator, hierarchical crossover, specially designed for optimization of tree-structured organizations. The proposed hierarchical crossover operator based on the previously described Representation 1 contains swapping of sub-organizations and a repair strategy to keep the number of total leaf nodes constant. It is implemented as follows.

First of all, we compare the number of structure levels of two randomly selected organization solutions from the population. Denote the organization with more levels as the first individual and the number of levels as $T$. Denote the organization with fewer levels as the second individual. (In the case of a tie, the order can be arbitrarily assigned.) After that, we choose a node randomly from all nodes whose level number is between 1 and $T-1$ from the first solution and denote the level number of the chosen node as $S$. Thirdly, we choose a node randomly at Level $S$, or the penultimate level, whichever is smaller, from the second solution, and exchange the sub-structures between the two solutions below the chosen nodes. If any of the solution candidates have only one level, we generate two random individuals of maximum tree depth instead. The exchange ensures that the two newly formed organization structures do not exceed the maximum height of their parent structures. However, the exchanged sub-structures do not necessarily contain equal number of leaf nodes. Thus, we propose the following repair strategy.

\begin{algorithm}[!t]
\caption{Hierarchical crossover}
\begin{algorithmic}[1]\label{al:1}
\STATE Let $p_1$ and $p_2$ be the array representations of two selected parents.
\IF {$\max(p_1)<\max(p_2)$}
    \STATE Exchange $p_1$ and $p_2$;
\ENDIF
\STATE $T=\max(p_1)$;
\IF {$T==1$ or $\max(p_2)==1$}
    \STATE Randomly generate two offsprings, $o_1$ and $o_2$, of maximum tree depth;
\ENDIF
\STATE List all possible crossover nodes of $p_1$ from Level 1 till $T-1$;
\STATE Randomly select a node from the above list as $cp_1$;
\STATE Record the level number of $cp_1$ as $S$;
\STATE Get the segments of the array representation of the sub-structure below $cp_1$ as $ss\_c1$;
\STATE Get the segments of the array representation to the left of the sub-structure below $cp_1$ as $ss\_l1$;
\STATE Get the segments of the array representation to the right of the sub-structure below $cp_1$ as $ss\_r1$;
\STATE Randomly select a node $cp_2$ from $p_2$ at the Level No. $\min[S, \max(p2)-1)]$;
\STATE Get the segments of the array representation of the sub-structure below $cp_2$ as $ss\_c2$;
\STATE Get the segments of the array representation to the left of the sub-structure below $cp_2$ as $ss\_l2$;
\STATE Get the segments of the array representation to the right of the sub-structure below $cp_2$ as $ss\_r2$;
\STATE $o_1=[\begin{array}{ccc} ss\_l1 & ss\_c2 & ss\_r1 \end{array}]$;
\STATE $o_2=[\begin{array}{ccc} ss\_l2 & ss\_c1 & ss\_r2 \end{array}]$;
\IF {length($o_1$) $>$ length($p_1$)}
    \STATE exnum = length($o_1$) - length($p_1$);
    \FOR {$j=1:$ exnum}
        \STATE Randomly select an integer $k_1$ between 1 and length($o_1$);
        \STATE Randomly select an integer $k_2$ between 1 and length($o_2$)+1;
        \STATE $o_2=[\begin{array}{ccc} o_2(1:k_2-1) & o_1(k_1) & o_2(k_2:end) \end{array}]$;
        \STATE $o_1=[\begin{array}{cc} o_1(1:k_1-1) & o_1(k_1+1:end) \end{array}]$;
    \ENDFOR
\ELSIF {length($o_2$) $>$ length($p_2$)}
    \STATE exnum = length($o_2$) - length($p_2$);
    \FOR {$j=1:$ exnum}
        \STATE Randomly select an integer $k_2$ between 1 and length($o_2$);
        \STATE Randomly select an integer $k_1$ between 1 and length($o_1$)+1;
        \STATE $o_1=[\begin{array}{ccc} o_1(1:k_1-1) & o_2(k_2) & o_1(k_1:end) \end{array}]$;
        \STATE $o_2=[\begin{array}{cc} o_2(1:k_2-1) & o_2(k_2+1:end) \end{array}]$;
    \ENDFOR
\ENDIF
\end{algorithmic}
\end{algorithm}

Find the solution with longer representation and randomly pick out one digit from it and insert this digit into a random slot in the other solution. Continue until the two solutions have equal length. This will guarantee the validity of the two solutions, as shown in Figure \ref{fig:3}(a,c). Illustrated in both the array representation and the organization structures, Figure \ref{fig:3} displays the difference between the proposed hierarchical crossover and one-point crossover. The pseudo code of hierarchical crossover is given in Algorithm \ref{al:1}.

To apply hierarchical crossover to Representation 2, all we need is to bundle each column and move the second row together with the first row. As for organizations in Representation 3, the repair strategy is implemented with the digits randomly picked out from non-zero locations only and until each selected organizations have the same number of leaf nodes as before.

As seen from Figure \ref{fig:3}, a branch of the tree is corresponding to a piece of gene fragment. By swapping the two selected gene segments in the parents, we get two new organization instances with exchanged sub-organizations. This step is similar to two-point crossover, in which the segments between the two randomly selected crossover points of both parents are swapped to form the offspring. However, like one-point crossover, two-point crossover also does not concern whether the selected gene segments correspond to the whole tree branches or not. And as long as the two crossover points are determined, the locations of the segments in the arrays do not change. Hierarchical crossover is different from two-point crossover in that it focuses on the branches of the tree structures and only change the gene segments that refer to whole branches. Moreover, the locations of the two gene segments of the parents may differ from each other, and the repair strategy promotes organization update.

In addition to the crossover method mentioned above, we use the mutation of small perturbation. It is different from bit-wise mutation in that the digit can only increase by 1 or decrease by 1 with equal probability. In the cases of the boundaries, if the perturbed digit is out of bounds, the original value is restored. The pseudo code of the mutation operator based on Representation 1 is displayed in Algorithm \ref{al:2}.

\begin{algorithm}[!t]
\caption{Mutation of small perturbation}
\begin{algorithmic}[1]\label{al:2}
\STATE Let $os$ be the array representation of an offspring created by the crossover operator, $numVar$ be the length of the representation, $mutOps$ be the mutation probability, and $maxTreeDepth$ be the maximum tree depth.
\STATE $rN=rand(size(os,1),numVar)<mutOps$;
\STATE $os=os+rN\times((rand(size(os,1),numVar)>0.5)\times2-1)$;
\STATE $os(os==0)=1$;
\STATE $os(os==maxTreeDepth+1)=maxTreeDepth$;
\end{algorithmic}
\end{algorithm}

\section{The Information Retrieval Model}
In this paper we will examine the algorithm in the information retrieval system \cite{Horling-Lesser:2008}. A structured, hierarchical organization composed of nodes as mediators, aggregators, and databases is used to model the IR system. An agent is assigned for each node to take the corresponding functions. The information recall and the query response time are combined to form a metric to determine the utility of the organization. Detailed procedures to calculate the utility can be found in \cite{Horling-Lesser:2008}.

At the top level of each hierarchy is a mediator. The user sends a query, which a randomly assigned mediator is responsible to handle. It uses the collection signatures of all the mediators to compare data sources, then routes the query to those mediators that seem appropriate. After the query has been directed through the aggregators and processed by all the databases under the selected mediators, the responsible mediator finally collects and delivers the resulting data.

\subsection{Simplifying Organization Representation with the IR Model}

\begin{figure}[b!]
    \includegraphics[trim = 40mm 189mm 40mm 30mm, clip, width = 5in]{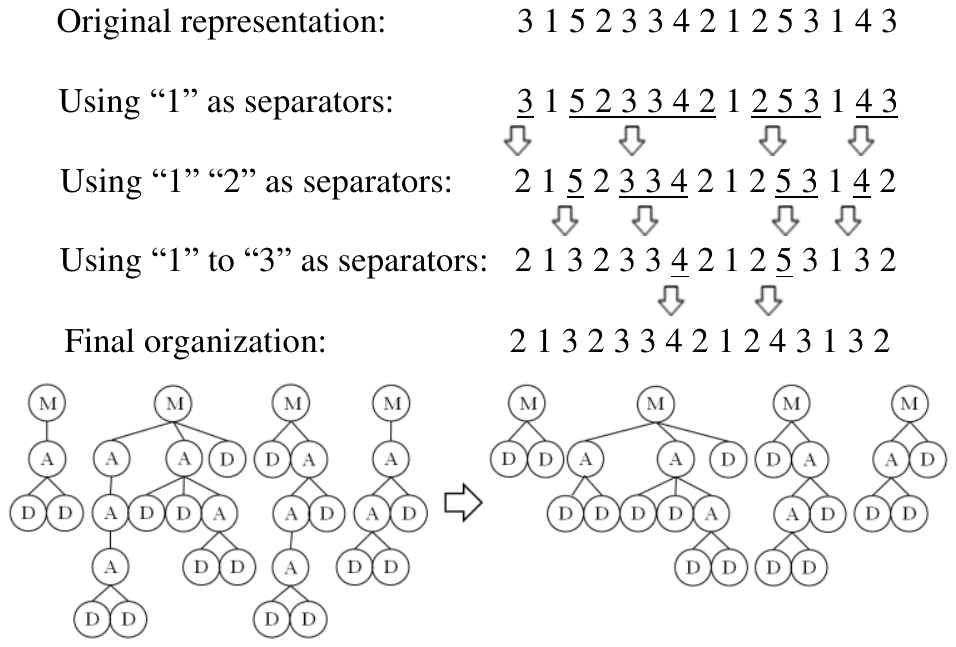}
    \caption{Simplifying the organization. Nodes $M$ are mediators, nodes $A$ are aggregators, and nodes $D$ are databases.}\label{fig:4}
\end{figure}

Since it is assumed in the IR model that all the databases in the system contain the same amount of topic data, and thus, there are no differences among the end nodes (i.e. leaves of the trees), we may apply Representation 1 to the IR model. Here Level 1 is the mediator level, where nodes are all mediators. The intermediate nodes correspond to aggregators, and the leaf nodes are database agents. The whole organization can be outlined by a set of trees.

From a practical viewpoint, we notice that it is not necessary to include an aggregator if it only has one subordinate, because it will only increase the information transmission delay and not bring any integration advantages. Hence, if such an organization instance emerges, we can simply omit the aggregator node and reduce the organization structure by one level.

Related modification can be made in the array representation, which is summarized below. Firstly, obtain all the segments of a genome between adjacent mediators (i.e. the integer series between 1's). Set the smallest values of these segments to 2. Secondly, obtain all the segments with 1's and 2's as separators. Set the smallest values of these segments to 3. Continue until the highest level of the organization. Figure \ref{fig:4} shows the detailed steps of a sample simplifying procedure. It transforms a 5-level sample organization of the IR system to a 4-level one. The simplifying procedure is employed to achieve higher utility. At the same time, the number of organization instances we have to evaluate for every representation is reduced to one.

\subsection{Implementation and Evaluation Criteria}

\begin{table}[!b]
    \caption{Configurations of HGA}\label{tb:1}
    \centering
    \begin{tabular}{|c|c|c|}
    \hline
    No. DBs & Population Size & No. of Candidate Evaluations \\ [0.5ex]
    \hline
    12  &   50      &   2,000 \\
    14  &   100     &   5,000 \\
    16	&   200	    &   10,000 \\
    18	&   500	    &   50,000 \\
    20	&   500	    &   50,000 \\
    22	&   500     &	50,000 \\
    24	&   500     &	100,000 \\
    26	&   500	    &   100,000 \\
    28	&   500	    &   100,000 \\
    30	&   1,000	&   200,000 \\
    \hline
    \end{tabular}
\end{table}

The optimization is carried out using genetic algorithm with population of organizations represented by arrays, the hierarchical crossover and the mutation of small perturbation as described in the above sections. The utility value serves as the fitness measure of an individual organization.

We recognize that there are likely multiple optimal solutions that achieve the same utility in a given system environment, owing to the symmetry of the structures. Therefore, we need a method that allows growth in several promising areas in the search space. In other words, the diversity of the population should be enhanced and over-convergence should be avoided. We increase the competition between similar individuals by applying the restricted tournament selection (RTS) method described in \cite{Harik:1995}. It helps to preserve diverse building blocks needed to locate the optimal organization.

We compare the proposed algorithm, called hierarchical genetic algorithm (HGA), with the standard genetic algorithm using one-point crossover with bit-wise mutation (SGA1) and two-point crossover with bit-wise mutation (SGA2) in order to show the benefits of the newly introduced operators. We examine the algorithms in two aspects, the accuracy and the stability of search, which are evaluated by average percentage relative error (APRE) and success rate (SR) respectively. The percentage relative error (PRE) can be calculated by:
\begin{equation}
PRE=\frac{f_{best}-f}{f_{best}}\times 100
\end{equation}
where $f_{best}$ is the best known fitness value among all the runs of all the algorithms for a given test case, and $f$ is the current fitness value achieved by the algorithm. APRE is the average of the PRE values among all the independent runs of each test case. $SR\in[0,1]$ denotes the ratio of the number of runs in which the best known solution is found by the algorithm to the total number of runs in each test case.

We examine the test cases of 12, 14, 16, 18, 20, 22, 24, 26, 28, and 30 databases. The maximum height of the structures is set to be 4. The population size and the maximum number of candidate evaluations used are shown in Table \ref{tb:1}. All algorithms use a window size $w=5$ for RTS in the population updating stage. The mutation rate is 0.1. All test cases involve 10 independent runs.

The environment parameters are as follows: message latency = 20 milliseconds, process service rate = 10 per second, response service rate = 20 per second, and query rate = 3 per second. The search set size and query set size are set to be the total number of mediators for each organization. The response recall is therefore identical (100\%) in all cases, and the utility is determined by the response time. The computation time of the genetic operators and population updating is negligible compared to that of the candidate evaluations. Therefore, we conclude that the number of candidate evaluations is more suitable as an evaluation. All algorithms are tested in MATLAB 7.9.0.

\section{Experimental Results}
\subsection{Comparison of Results}

\begin{table}[!b]
\caption{Experiment Results (APRE and SR)} \label{tb:results}
\centering
%\resizebox*{1\textwidth}{!}{
    \begin{tabular}{|c|c|c|c|c|c|c|}\hline
    \multirow{2}{*}{No. DBs} & \multicolumn{2}{|c|}{SGA1} & \multicolumn{2}{c|}{SGA2} & \multicolumn{2}{c|}{HGA} \\ [0.5ex]
    %\hline
                             & APRE & SR & APRE & SR & APRE & SR \\\hline
                        12 & 0.1103 & 0.5 & 0.1122 & 0.5 & {\bf 0.0370} & {\bf 0.8} \\
                        14 & 0.0090	& 0.8 & 0.0460 & 0.7 & {\bf 0} & {\bf 1} \\
                        16 & 0.0966	& 0.7 & 0.0869 & 0.8 & {\bf 0} & {\bf 1} \\
                        18 & 0.0940	& {\bf 0.8} & {\bf 0.0372} & {\bf 0.8} & 0.0505 & {\bf 0.8} \\
                        20 & 0.1150	& {\bf 0.5} & 0.3076 & 0.1 & {\bf 0.0749} & 0.3 \\
                        22 & 0.2037	& 0.1 & 0.3085 & 0 & {\bf 0.0031} & {\bf 0.9} \\
                        24 & 0.3376	& 0.2 & 0.4914 & 0 & {\bf 0.0406} & {\bf 0.9} \\
                        26 & 0.1556	& 0.4 & 0.3494 & 0.1 & {\bf 0} & {\bf 1} \\
                        28 & 0.2104	& 0.2 & 0.5307 & 0 & {\bf 0.0067} & {\bf 0.9} \\
                        30 & 0.2470	& 0.2 & 0.4825 & 0.1 & {\bf 0} & {\bf 1} \\
    \hline
    \end{tabular}
%}
\end{table}

In this section we will demonstrate the advantage of the proposed HGA over the standard GA with one-point and two point crossover in locating the best organization of the IR system. Table \ref{tb:results} shows the APRE and SR values of SGA1, SGA2, and HGA in the 10 test cases. The best value for each test case is highlighted. It can be observed that the accuracy of the proposed HGA is better than SGA1 and SGA2 in 9 out of the 10 cases. Only in the 18-database case, SGA2 outperforms SGA1 and HGA in terms of APRE.

Regarding the search ability, HGA also has an advantage over SGA1 and SGA2 in the majority of the test cases. The superiority of HGA is more pronounced in larger-scale organizations which contain more than 20 database nodes. In those cases, SGA1 and SGA2 fail to locate the best known organization instances for most of the time, whereas the proposed HGA still maintains high SR values of 90\%–100\%. This proves that HGA uses fewer candidate evaluations to locate the best organization than the conventional GAs. Given that the candidate evaluations are very computationally expensive in many real-world systems, it is beneficial to use HGA in such circumstances.

The non-parametric Wilcoxon signed-rank test is performed to judge whether there is a statistically significant difference between HGA and SGA1/SGA2. As a pair-wise test in a multi-problem scenario, we use all the APRE values of each algorithm as sample vectors. The null hypothesis H0 is set as ``there is no difference between HGA and SGA1/SGA2 in terms of the APRE values." Accordingly, the alternative hypothesis H1 is ``The two methods are significantly different." A significance level of 5\% is implemented. We get that the APRE values of HGA is different from those of SGA1 at the p-value of 0.1953\% and is different from those of SGA2 at the p-value of 0.3906\%, which suggests the proposed algorithm is statistically better than both SGAs.

\subsection{Comparison with State-of-the-art Multi-agent Organization Design Methodologies}
\subsubsection{Comparison with ODML}
In ODML \cite{Horling-Lesser:2008}, the exploitation of hard constraints, equivalence classes, parallel search, and model abstraction, are used to assist the search process. Rather than going through a decision tree to verify the constraint requirements as ODML does, our algorithm incorporates the array representation that already ensures the satisfaction of constraints in maximum height of the structure and the number of databases in the system. Parallel search and model abstraction are also intuitively used in HGA.

In ODML, the agents are treated in three equivalence classes: the mediators, the aggregators, and the databases. The number of organization alternatives is cut down by discarding organizations which are equivalent to an existing one under the symmetry principle. For instance, the organizations that are symmetrical to each other are equivalent in ODML, and only one should be kept as a candidate.

For the 10 test cases of the IR system, despite the truncation of redundant equivalent organizations, the total number of evaluations needed for ODML can be approximated as $O(2.1^N)$, where N is the number of leaf nodes. In particular, the number of evaluations needed for the 12-database case is 4,304, and that of the 30-database case is 3,788,734,984. Compared with ODML, HGA uses much fewer evaluations. This saves a great amount of computation burden, as the calculation of utility functions can be very computationally expensive.

It should be noted that the HGA is compatible with all the above mentioned search space reducing measures. However, we maintain the equivalent organizations, for they may contribute to finding a good solution. This compromise results in a larger search space for HGA, whereas in ODML, the elimination of redundant equivalent organizations helps to narrow down the search range to a great extent. When the number of equivalent organizations is prevailing, ODML should have an advantage benefited from the elimination measure. Nevertheless, in the studied system, HGA still manages to evolve the population of organizations at a reasonable pace, and it spares the computation time for branch pruning at the same time.

\subsubsection{Comparison with KB-ORG}
Different from ODML, KB-ORG \cite{Sims-et-al:2008} emphasizes the use of design knowledge in application and coordination levels. With good knowledge, a system can be designed with relatively affordable cost. However, in certain cases, design knowledge is hard to acquire. It largely depends on the level of expertise of the designer. Design knowledge is not guaranteed to be accurate, and it needs to be updated following the change of environmental variables.

In the IR model, the main difficulty lies in the coordination of agents, e.g. how many levels of hierarchy is needed. Assume that the designer has successfully located the best organizations for 12, 14, 16, and 18 databases. He may think that a 3-level hierarchy is best for the 20-databse case as well. This will reduce the search space to 58,327 organizations, but it will miss out the highest rated organization, which is 4-leveled with the utility of 821.60. The utility of the best 3-level organization is 814.11, which is worse than the worst utility (820.01) found by HGA within 50,000 evaluations in all runs. On the other hand, if the designer reaches at a relaxed bound of structure height of either 3 or 4 for the 20-database case, the number of organization evaluations will mount to 2,120,662. Although design knowledge could bring convenience, it is sometimes far from satisfactory. In contrast, our algorithm searches for the highest rated organization in a heuristic way. It is able to handle these test cases without the assistance of external expertise.

\section{Conclusions and Future Work}
We have proposed a novel genetic algorithm based approach to solve the problem of designing the best organization in hierarchical multi-agent systems. Complementary to existing methodologies that emphasize on the pruning of the search space, our algorithm uses a bio-inspired evolutionary approach to lead the search to promising areas of the search space, and is thus suitable for optimizing multi-agent systems with a great variety of possible organizations where designer expertise alone is not enough or hard to acquire. In the example of the information retrieval system, we have empirically proved that the algorithm is able to discover competitive baseline structures in different systems and assemble them to obtain the highest rated structure from a magnitude of up to 109 organization alternatives. Moreover, the new crossover and mutation methods helped HGA enhance the search efficiency greatly, promoting its performance both in accuracy and stability.

With necessary modifications, the algorithm is applicable to other models as well. It can be used to optimize any tree-based hierarchical organizations of multi-agent systems, given that proper fitness values are assigned. Application areas include scenario tree and decision tree optimization. On the other hand, the proposed array representation can also be used for other forms of MAS organizations, such as holarchies. We will also explore the incorporation of fuzzy cognitive \cite{Miao-et-al:2002,Song-et-al:2009}, goal-oriented \cite{Yu-et-al:2007}, and inference based \cite{Miao-et-al:2001} methods into the proposed framework to improve its robustness and connection with real-world application domains.

%ACKNOWLEDGMENTS are optional
\section{Acknowledgments}
This research is supported by the  National Research Foundation, Prime Minister's Office, Singapore under its IDM Futures Funding Initiative and administered by the Interactive and Digital Media Programme Office.

%
% The following two commands are all you need in the
% initial runs of your .tex file to
% produce the bibliography for the citations in your paper.
\bibliographystyle{splncs03}
\bibliography{References}  % sigproc.bib is the name of the Bibliography in this case
% You must have a proper ".bib" file
%  and remember to run:
% latex bibtex latex latex
% to resolve all references
%
% ACM needs 'a single self-contained file'!
%
%APPENDICES are optional
%\balancecolumns
%\appendix
%%Appendix A
%\section{Headings in Appendices}

\end{document}